\newtheorem{theorem}{Theorem}
\newtheorem{proposition}{Proposition}
\renewcommand{\algorithmicrequire}{\textbf{Input:}} 
\renewcommand{\algorithmicensure}{\textbf{Output:}}
\DeclareMathOperator*{\argmax}{\arg\max}
\newcommand{\mbbR}{\mathbb{R}} 
\newcommand{\mbbP}{\mathbb{P}}
\newcommand{\mbbE}{\mathbb{E}}
\newcommand{\bdx}{\boldsymbol{x}} 
\newcommand{\bdy}{\boldsymbol{y}} 
\newcommand{\bdT}{\boldsymbol{T}} 
\newcommand{\bda}{\boldsymbol{\alpha}} 
\newcommand{\bdb}{\boldsymbol{\beta}} 
\newcommand{\bdphi}{\boldsymbol{\phi}} 
\newcommand{\bdpsi}{\boldsymbol{\psi}}
\newcommand{\bdp}{\boldsymbol{p}}
\newcommand{\mcX}{\mathcal{X}}
\newcommand{\mcY}{\mathcal{Y}}
\newcommand{\mcC}{\mathcal{C}}
\newcommand{\mcM}{\mathcal{M}}
\newcommand{\mcP}{\mathcal{P}}
\begin{document}

\title
{
A Double Maximization Approach for Optimizing the LM Rate of Mismatched Decoding
\thanks{The first three authors contributed equally to this work and $\dag$ marked the corresponding author. This work was partially supported by National Key Research and Development Program of China (2018YFA0701603) and National Natural Science Foundation of China (12271289 and 62231022).}
}


\author[1]{Lingyi Chen}
\author[1]{Shitong Wu}
\author[1]{Xinwei Li}
\author[2]{Huihui Wu}
\author[1]{Hao Wu}
\author[3$\dag$]{Wenyi Zhang}
\affil[1]{Department of Mathematical Sciences, Tsinghua University, Beijing 100084, China}
\affil[2]{Yangtze Delta Region Institute (Huzhou), University of Electronic Science and Technology of China, 
\authorcr
Huzhou, Zhejiang, 313000, China.} 
\affil[3]{Department of Electronic Engineering and Information Science, 
\authorcr University of Science and Technology of China, Hefei, Anhui 230027, China 
\authorcr Email: wenyizha@ustc.edu.cn }

\maketitle

\begin{abstract}
An approach is established for maximizing 
the Lower bound on the Mismatch capacity 
(hereafter abbreviated as LM rate),
a key performance bound in mismatched decoding, by optimizing the channel input probability distribution. 
Under a fixed channel input probability distribution, the computation of the corresponding LM rate is a convex optimization problem. When optimizing the channel input probability distribution, however, the corresponding optimization problem adopts a max-min formulation, which is generally non-convex and is intractable with standard approaches. To solve this problem, a novel dual form of the LM rate is proposed, thereby transforming the max-min formulation into an equivalent double maximization formulation. This new formulation leads to a maximization problem setup wherein each individual optimization direction is convex. 
Consequently, an alternating maximization algorithm is established to solve the resultant maximization problem setup. Each step of the algorithm only involves a closed-form iteration, which is efficiently implemented with standard optimization procedures. Numerical experiments show the proposed approach for optimizing the LM rate leads to noticeable rate gains.

%
\end{abstract}
%

\section{Introduction}

%
The topic of mismatched decoding has aroused considerable attention since the 1970s \cite{omura1982coded}, due to its application to a myriad of practical scenarios, encompassing situations where channel knowledge is imperfect or where transceiver implementations are not fully optimized. 
Noteworthy scenarios include channels affected by uncertainties, such as fading in wireless communication systems \cite{Lapidoth1999Fading}, channels utilizing non-ideal transceiver hardware \cite{zhang2011general}, or channels employing constrained receiver structures \cite{salz1995Integer}. 
%
%
In such scenarios, it is common for the receiver to employ a prescribed decoding metric, which may not be matched to the actual channel transition law.
%
%
As a result, extensive research has been conducted regarding the fundamental principles of mismatched decoding; see, e.g., \cite{Merhav1994Mismatched,Lapidoth1998Reliable,lapidoth1996mismatched,Ganti2000Mismatched,Scarlett2020Information} and references therein. 

The mismatch capacity, which characterizes the supreme of achievable information rates under a prescribed decoding metric, has been introduced to evaluated the ultimate performance limit of mismatched decoding \cite{Lapidoth1998Reliable,Scarlett2020Information}. When the decoding metric is matched to the channel transition law, this reduces to the familiar channel capacity.
For the general mismatched case, to date, the mismatch capacity remains an open problem \cite{csiszar1995Channel}. 
%
%
%
%
By constructing different codebook ensembles, several lower bounds of the mismatch capacity have been developed. 
%
These include the generalized mutual information (GMI) based on independent and identically distributed (i.i.d.) random codebooks \cite{Kaplan1993Information}, the LM (“Lower [bound on the] Mismatch [capacity]”) rate based on constant-composition random codebooks \cite{csiszar1981graph}, and several improvements based on the GMI and the LM rate combined with more sophisticated techniques like superposition \cite{Scarlett2020Information}. 
Furthermore, we can maximize these lower bounds by optimizing the channel input probability distribution, leading to tighter lower bounds on the mismatch capacity. 
In this work, we study the maximized LM rate over all feasible channel input probability distributions, denoted as $C_\mathrm{LM}$, considering the fact that under the same channel input probability distribution, the LM rate is, 
in general, a better lower bound than the GMI.

However, the computation of $C_\mathrm{LM}$ is challenging. 
With a prescribed channel input probability distribution, the computation of the GMI and the LM rate can be deduced into convex optimization problems which can be readily solved by solvers such as the CVX \cite{boyd2004convex}. 
The computation of $C_\mathrm{LM}$, instead, is a max-min optimization problem, which turns out to be generally non-convex \cite{Scarlett2020Information}. 
This renders directly invoking convex optimization solvers infeasible. 
%
%
%
%
Moreover, due to the max-min optimization problem formulation and additional constraints relating the channel input and output probability distributions via the channel transition law, 
the computation of $C_\mathrm{LM}$ cannot be directly cast as the standard optimization problems or an entropy regularized optimal transport problem like those in \cite{ye2022optimal,wu2022communication}, and cannot be directly solved via gradient or alternating procedures, such as the Sinkhorn algorithm \cite{2013sinkhorn}.
The above discussions may explain why we have not seen any work on computing $C_\mathrm{LM}$ so far.

In this paper, we propose a novel approach for computing $C_\mathrm{LM}$ over a discrete memoryless channel (DMC). 
To address the difficulty due to the max-min problem formulation, we propose a new dual form of the LM rate. This transforms the max-min optimization problem into a double maximization problem, which enables us to develop an alternating maximization algorithm with guarantee of local convergence.
%
%
%
%
Moreover, we construct a variable transform and then propose a maximization model for computing $C_\mathrm{LM}$ in which each optimization direction is convex. This leads to an alternating maximization algorithm, termed Alternating Double Maximization (ADM) algorithm, for solving the proposed maximization model, whose each step only involves a closed-form iteration for alternating ascent.
%
%
%
Numerical experiments show that for Gaussian channels with IQ imbalance 
under QPSK, 16QAM, 64QAM and 256QAM constellations, the proposed algorithm is efficient and leads to noticeable rate gains when optimizing the channel input probability distribution.

\section{Problem Formulation}\label{sec_problem}

%
We consider a DMC with transition law (i.e., conditional probability distribution) $W(y|x)$ over the channel input alphabet $\mcX=\{x_1,\cdots,x_M\}$ and the channel output alphabet $\mcY=\{y_1,\cdots,y_N\}$.
Given a channel input probability distribution $P_{X}\in\mcP(\mcX)$, the channel input-output joint probability distribution $P_{XY}\in\mcP(\mcX\times\mcY)$ and the channel output distribution $P_{Y}\in\mcP(\mcY)$ 
are then induced by the transition law \cite{Ganti2000Mismatched}.

An encoding scheme is represented by a codebook $\mcC_{n}$ consisting of $2^{nR}$ length-$n$ sequences $\{\bdx^{n}(m)\}_{m=1}^{2^{nR}}$.
%
The encoder assigns a codeword $\bdx^{n}(m)\in\mcX^{n}$ to each message index $m$ uniformly selected from the message set $\mcM=\{1, 2, \ldots, 2^{nR}\}$.
The decoder assigns an estimate $\hat{m}\in\mcM$ to each received channel output sequence $\bdy^{n}\in\mcY^{n}$ according to the following prescribed decoding rule:
\begin{equation*}
    \hat{m} = \argmax_{j\in\mathcal{M}}\prod_{i=1}^{n}q(x_{i}(j),y_{i}),
\end{equation*}
where $q: \mcX\times\mcY\to\mbbR$ is called the decoding metric, and when $q(x, y)$ is not proportional to $W(y|x)$, it is called a \textit{mismatched decoding metric}.

Given a pair of encoder and decoder, the associated error probability is given by $P_{e}^{(n)}=\mbbP[\hat{m}\neq m]$ where the probability is defined with respect to the randomness of the message and the DMC.
A rate $R$ is said to be achievable if $\lim_{n\rightarrow\infty}P_{e}^{(n)}=0$ under the 
decoding rule.
Given a probability distribution $Q_X\in\mcP(\mcX)$, if the codebook $\mcC_{n}$ is constructed in such a way that each codeword $\bdx^{n}(m)$ has its composition (a.k.a. type) fixed as $Q_X$ \cite{csiszar1981graph}, and all codewords are independent, then the following so-called LM rate \cite{Merhav1994Mismatched}:
\begin{equation}\label{LM_def}
I_{\mathrm{LM}}\left(Q_X\right)=\min_{\substack{\widetilde{P}_{X Y} \in \mathcal{P}(\mathcal{X} \times \mathcal{Y}): \widetilde{P}_X=Q_X, \widetilde{P}_Y=P_Y \\ \mathbb{E}_{\widetilde{P}}[\log q(X, Y)] \geq \mathbb{E}_P[\log q(X, Y)]}} I_{\widetilde{P}}(X ; Y),
\end{equation}
is achievable, where the subscript $\widetilde{P}$ indicates that the corresponding expectation and mutual information are with respect to the  auxiliary joint probability distribution $\widetilde{P}_{X Y}$, and $P_Y$ 
is induced by marginalizing the joint probability distribution 
$P_{XY}(x,y) = W(y|x)Q_X(x)$.
%
%
%

Consequently, when optimizing over $Q_X \in \mcP(\mcX)$, we can introduce
\begin{equation}\label{LMC_def}
    C_\mathrm{LM} = \max_{Q_X\in\mcP(\mcX)}I_{\mathrm{LM}}\left(Q_X\right),
\end{equation}
as the optimized LM rate, 
which is the objective of study in our work.
%
%
%
In practice, the set $\mcP(\mcX)$ can impose certain constraint on the channel input. 
A typical constraint is an average power constraint $\Gamma$ like 
$\mbbE \left[X^2 \right]\leq \Gamma$
as considered in the sequel.
%
%

%
%

For a given $Q_X$, $I_\mathrm{LM}(Q_X)$ as given by \eqref{LM_def} is a convex optimization problem, and thus can be readily computed by solvers like CVX \cite{boyd2004convex}.
When optimizing over $Q_X$, it is evident that the resulting problem of computing $C_\mathrm{LM}$ adopts a max-min form, and becomes a non-convex optimization problem \cite{Scarlett2020Information}.
As a result, computing $C_\mathrm{LM}$ is not amenable to standard algorithms or solvers, due to the intrinsic max-min structure and the non-convex nature of objective function.
%

%
%
%

\section{Double Maximization Model and Alternating Maximization Algorithm} \label{sec_main}

%
The key to dealing with the challenges in the computation of $C_{\mathrm{LM}}$ is the introduction of a novel dual form of $I_{\mathrm{LM}}$.
This converts the max-min problem into a double maximization model, and decouples the constraints to produce simpler alternatives.
%
%
%
Subsequently, we design an alternating maximization algorithm to solve the transformed maximization problem. 
%
%
It is worth emphasizing that our approach ensures each step in the algorithm only involves a closed-form iteration, which can be efficiently handled by standard optimization procedures.
The alternating maximization algorithm exhibits local convergence behavior, which will also be confirmed by our numerical experiment in Section \ref{sec_numerical}.

\subsection{Double Maximization Model}



In fact, prior research has explored some dual formulations of the LM rate problem. 
For instance, a recent work by \cite{ye2022optimal} adopted an optimal transport (OT) approach to solve the LM rate problem whose dual form was obtained by analyzing its Lagrange function, featuring a kernel matrix multiplication structure. 
Besides, some other dual forms and their equivalence have been 
summarized in \cite{Scarlett2020Information}.

However, when computing $C_\mathrm{LM}$, the 
channel input probability distribution is treated as parameters, and 
is related to the channel output probability distribution
via the channel transition law. 
This situation is much more complicated than the classical LM rate problem. 
This aspect distinguishes it from the previous LM rate problem wherein these parameters are prescribed.
Furthermore, in algorithm design, it is essential to employ appropriate dual forms capable of producing closed-form iterations to 
improve algorithm efficiency.
In this regard, the dual forms proposed in earlier studies are not suitable for direct application to compute $C_\mathrm{LM}$.

To solve the aforementioned difficulties, we propose a novel dual form,
%
%
which takes into account the coupling effect between the channel input and output probability distributions, 
and exhibits the desirable property of generating closed-form iterations when designing algorithms.
To simplify the notations, we introduce
%
\begin{equation} \label{annotation}
\begin{aligned}
    &\gamma_{ij}=\widetilde{P}_{X Y}(x_i,y_j),\quad p_{i}={Q}_X(x_i),\quad q_{j}={P}_Y(y_j), \\
    &d_{ij}=-\log {q(x_i,y_j)},\quad s_{ij}=W(y_j | x_i), 
\end{aligned}
\end{equation}
where $q_j = \sum_{i=1}^M s_{ij} p_i$ according to the channel transition law.
%
%
%
%
Then, 
the following result presents the proposed dual form.
%
%
%
%
%
\begin{theorem}
For a fixed 
$Q_X\in\mcP(\mcX)$, a dual form of the LM rate problem \eqref{LM_def} can be written as:
    \begin{equation}
    \begin{split}
        &\max_{\substack{\bdphi, \bdpsi, \\ \zeta \geq 0}} 
        \left(
        -\sum_{i=1}^M \sum_{j=1}^N \phi_ie^{-\zeta d_{ij}}\psi_j - \sum_{i=1}^M p_i \log p_i \right.\\
        &\quad\left. -\!\! \sum_{j=1}^N \left(\sum_{i=1}^M s_{ij} p_i \right) \log \left(\sum_{i=1}^M s_{ij} p_i \right) 
        \!\!+\!\! \sum_{i=1}^M p_i \log \phi_i 
        \right.
        \\
        &\quad\left.
        +\!\! \sum_{j=1}^N \left(\sum_{i=1}^M s_{ij} p_i \right) \log \psi_j 
        \!+\! 1 \!-\! \zeta \sum_{i=1}^M \sum_{j=1}^N d_{ij}s_{ij}p_i
        \right),
    \end{split}
    \label{LM_dual}
    \end{equation}
where $\bdphi\in\mbbR^{M}, \bdpsi\in\mbbR^{N}$ and $\zeta\in\mbbR^{+}$.
\end{theorem}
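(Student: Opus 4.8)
\emph{Proof plan.} The approach is to recognize \eqref{LM_def} as a convex program in the joint law, apply Lagrangian duality, and then reparametrize the multipliers to obtain \eqref{LM_dual}. First I would restate \eqref{LM_def} in the notation of \eqref{annotation}: every feasible coupling has $X$-marginal $Q_X$ and $Y$-marginal $P_Y$, so the objective is $I_{\widetilde{P}}(X;Y)=\sum_{i,j}\gamma_{ij}\log\frac{\gamma_{ij}}{p_iq_j}$, a function jointly convex in $\gamma=(\gamma_{ij})$, while the remaining constraints --- the marginal equalities $\sum_j\gamma_{ij}=p_i$ and $\sum_i\gamma_{ij}=q_j$, the nonnegativity $\gamma_{ij}\ge0$, and the metric inequality $\sum_{i,j}\gamma_{ij}d_{ij}\le\sum_{i,j}p_is_{ij}d_{ij}$ (which is $\mbbE_{\widetilde{P}}[\log q(X,Y)]\ge\mbbE_{P}[\log q(X,Y)]$ rewritten via $d_{ij}=-\log q(x_i,y_j)$) --- are all affine in $\gamma$. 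Thus \eqref{LM_def} is a linearly constrained convex program whose minimum is attained, and strong duality holds.

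Next I would form the Lagrangian with multipliers $\lambda_i\in\mbbR$ for the $X$-marginal, $\mu_j\in\mbbR$ for the $Y$-marginal, and $\zeta\ge0$ for the metric inequality, keeping $\gamma\ge0$ implicit. The stationarity condition $\partial L/\partial\gamma_{ij}=\log\frac{\gamma_{ij}}{p_iq_j}+1-\lambda_i-\mu_j+\zeta d_{ij}=0$ gives the inner minimizer in closed form, $\gamma_{ij}=p_iq_j\,e^{\lambda_i+\mu_j-1-\zeta d_{ij}}$, which is automatically positive, so the nonnegativity constraints stay inactive. Substituting this back, the objective contributes $\sum_{i,j}\gamma_{ij}(\lambda_i+\mu_j-1-\zeta d_{ij})$, and the terms $\sum_{i,j}\gamma_{ij}\lambda_i$, $\sum_{i,j}\gamma_{ij}\mu_j$, $\zeta\sum_{i,j}\gamma_{ij}d_{ij}$ cancel against the matching pieces of the constraint terms; what remains is the dual function
\[
g(\lambda,\mu,\zeta)=-\sum_{i,j}p_iq_j\,e^{\lambda_i+\mu_j-1-\zeta d_{ij}}+\sum_i\lambda_ip_i+\sum_j\mu_jq_j-\zeta\sum_{i,j}p_is_{ij}d_{ij}.
\]

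Finally I would apply the change of variables $\phi_i=p_i\,e^{\lambda_i-1/2}$, $\psi_j=q_j\,e^{\mu_j-1/2}$, so that $\phi_i\psi_j=p_iq_j\,e^{\lambda_i+\mu_j-1}$, $\lambda_i=\log\phi_i-\log p_i+\tfrac12$, and $\mu_j=\log\psi_j-\log q_j+\tfrac12$. Inserting these in $g$ converts $\sum_i\lambda_ip_i$ into $\sum_ip_i\log\phi_i-\sum_ip_i\log p_i+\tfrac12$ and $\sum_j\mu_jq_j$ into $\sum_jq_j\log\psi_j-\sum_jq_j\log q_j+\tfrac12$ (using $\sum_ip_i=\sum_jq_j=1$); recalling $q_j=\sum_is_{ij}p_i$, the result is exactly the objective of \eqref{LM_dual}, with the two $\tfrac12$ terms producing the ``$+1$''. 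Maximizing $g$ over $\phi,\psi$ (necessarily positive, since $\log\phi_i,\log\psi_j$ occur) and $\zeta\ge0$ therefore equals the primal minimum $I_{\mathrm{LM}}(Q_X)$ by strong duality, which is the claim. I expect the only delicate point to be the rigorous justification of strong duality; I would handle it by noting that the constraints of \eqref{LM_def} are all affine and its objective is closed, proper and continuous on the compact feasible set, so no duality gap arises (when $Q_X$ has full support a strictly feasible interior point can also be exhibited by slightly perturbing the product coupling $(p_iq_j)$ toward a metric-feasible coupling such as $(p_is_{ij})$).
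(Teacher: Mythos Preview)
Your proposal is correct and follows essentially the same route as the paper: form the Lagrangian of the convex program \eqref{LM_def}, solve the inner minimization in closed form to get $\gamma_{ij}^*=\phi_i\psi_j e^{-\zeta d_{ij}}$, and substitute back to obtain $g_{\mathrm{LM}}$. The only cosmetic differences are that the paper writes $I_{\widetilde P}(X;Y)$ as $\sum\gamma_{ij}\log\gamma_{ij}-\sum p_i\log p_i-\sum q_j\log q_j$ (rather than your KL form) and uses the parametrization $\phi_i=e^{-\alpha_i-1/2}$, $\psi_j=e^{-\beta_j-1/2}$ directly, while you absorb the marginals into the change of variables; your explicit discussion of strong duality is an addition the paper omits.
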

\begin{proof}
%
%
We construct the Lagrangian of $\eqref{LM_def}$ by introducing the dual variables $\bm \alpha \in \mathbb{R}^M$, $\bm \beta \in \mathbb{R}^N$, $\zeta \in \mathbb{R}^+$:
\begin{equation*} 
    \begin{aligned}
        &\mathcal{L}_{\mathrm{LM}}(\bm\gamma; \bm\alpha ,\bm\beta ,\zeta) = 
        \sum_{i=1}^M \sum_{j=1}^N \gamma_{ij}\log \gamma_{ij} 
        - \sum_{i=1}^M p_i \log p_i \\
        & -\!\! \sum_{j=1}^N q_j \log q_j + \zeta \left( \sum_{i=1}^M \sum_{j=1}^N d_{ij}\gamma_{ij} \!-\!\!\sum_{i=1}^M \sum_{j=1}^N d_{ij}s_{ij}p_i  \right) \\
        & + \sum_{i=1}^M \alpha_i \left( \sum_{j=1}^N \gamma_{ij} - p_i \right) + \sum_{j=1}^N \beta_j\left( \sum_{i=1}^M \gamma_{ij} - q_j\right).
    \end{aligned}
    \label{lm_lagarange}
\end{equation*}
%
%
By taking the derivative of $\mathcal{L}_{\mathrm{LM}}(\gamma; \bda, \bdb, \lambda)$ with respect to $\gamma_{ij}$, 
we can have the optimal solution $\gamma_{ij}^* = e^{-\alpha_{i}-\beta_{j}-\zeta d_{ij}-1}.$

Next, $\phi_i = e^{-\alpha_i-1/2}$, $\psi_j = e^{-\beta_j-1/2}$ are denoted for short. Then, the dual problem is written as 
\begin{equation*}
    \max_{\bdphi, \bdpsi, \zeta}~~ g_{\mathrm{LM}}(\bdphi, \bdpsi, \zeta)
\end{equation*}
where $g_{\mathrm{LM}}(\bdphi, \bdpsi, \zeta)$ is obtained by substituting the optimal solution $\gamma_{ij}^*$ into the Lagrangian $\mathcal{L}_{\mathrm{LM}}(\gamma; \bda, \bdb, \lambda)$, i.e., 
\begin{small}
\begin{align*}
    & g_{\mathrm{LM}} (\bdphi ,\bdpsi ,\zeta) = 1-\sum_{i=1}^M \sum_{j=1}^N \phi_i e^{-\zeta d_{ij}} \psi_j - \zeta \sum_{i=1}^M \sum_{j=1}^N d_{ij}s_{ij}p_i\\
    & - \sum_{i=1}^M p_i \log p_i - \sum_{j=1}^N q_j \log q_j  + \sum_{i=1}^M p_i \log \phi_i+ \sum_{j=1}^N q_j \log \psi_j.
\end{align*}
\end{small}
Noting the marginal condition $q_{j}=\sum_{i=1}^M p_i s_{ij}$, we obtain the dual expression $\eqref{LM_dual}$ by substituting it into $g_{\mathrm{LM}}(\bdphi ,\bdpsi ,\zeta)$.
\end{proof}


\begin{proposition} \label{prop:equi}
    The dual form presented in \eqref{LM_dual} is equivalent to the one introduced in \cite[p. 19]{Scarlett2020Information}.
\end{proposition}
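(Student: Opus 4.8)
The plan is to reduce the three-variable dual \eqref{LM_dual} to the two-variable form of \cite[p.~19]{Scarlett2020Information} by eliminating $\bdpsi$ through a partial maximization and then changing variables in $\bdphi$. Write $q_j=\sum_i s_{ij}p_i$ as in \eqref{annotation}. For fixed $\bdphi$ and $\zeta$, the objective of \eqref{LM_dual} is separable and strictly concave in each $\psi_j>0$, the only $\psi_j$-dependent terms being $-\psi_j\sum_i \phi_i e^{-\zeta d_{ij}}+q_j\log\psi_j$; its first-order condition gives the closed form $\psi_j^{\star}=q_j\big/\sum_i \phi_i e^{-\zeta d_{ij}}$. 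Substituting $\psi_j^{\star}$ back, the additive constant $1$ cancels the $-\sum_j q_j=-1$ produced by the substitution, the two $\sum_j q_j\log q_j$ contributions (one from \eqref{LM_dual}, one from $\log\psi_j^{\star}$) cancel, and one is left with the reduced objective
\[ \widetilde{g}(\bdphi,\zeta)=-\sum_{j}q_j\log\Big(\sum_k \phi_k e^{-\zeta d_{kj}}\Big)+\sum_i p_i\log\frac{\phi_i}{p_i}-\zeta\sum_{i,j} d_{ij}s_{ij}p_i . \]

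Next I would put $\phi_i=p_i a_i$ with $a_i>0$, use $\sum_i p_i=1$ and $\sum_j s_{ij}=1$ to rewrite every single-index sum above as a double sum with weight $p_i s_{ij}$, and recall from \eqref{annotation} that $e^{-d_{ij}}=q(x_i,y_j)$, so that $e^{-\zeta d_{ij}}=q(x_i,y_j)^{\zeta}$ and $-\zeta d_{ij}=\log e^{-\zeta d_{ij}}$. Collecting the three logarithms then yields
\[ \widetilde{g}(\bdphi,\zeta)=\sum_{i,j} p_i s_{ij}\,\log\frac{a_i\,q(x_i,y_j)^{\zeta}}{\sum_k p_k a_k\,q(x_k,y_j)^{\zeta}}, \]
which is exactly the quantity maximized in \cite[p.~19]{Scarlett2020Information} under the identifications $\zeta\leftrightarrow s$ and $\phi_i=p_i e^{a(x_i)}$ (equivalently $a(x_i)=\log(\phi_i/p_i)$); the feasible sets agree because on $\mathrm{supp}(Q_X)$ one has $\phi_i>0\Leftrightarrow a(x_i)\in\mbbR$, and $\zeta\ge 0$ in both. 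Since $\max_{\bdphi,\bdpsi,\zeta}$ of \eqref{LM_dual} equals $\max_{\bdphi,\zeta}\widetilde{g}(\bdphi,\zeta)$ by the partial maximization step, the two dual forms have the same optimal value, which is the assertion of Proposition~\ref{prop:equi}.

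I expect the main obstacle to be bookkeeping rather than conceptual: carrying out the cancellations after reinserting $\psi_j^{\star}$ without slips, and handling the degenerate coordinates so that every step is legitimate. Concretely, indices $i$ with $p_i=0$ (for which $p_i\log p_i$ and $p_i\log\phi_i$ are read as $0$) may be set to $\phi_i=0$ without loss of optimality, and likewise $\psi_j=0$ whenever $q_j=0$, so that $\psi_j^{\star}$ and the substitution $\phi_i=p_i a_i$ are well defined on $\mathrm{supp}(Q_X)$ and $\mathrm{supp}(P_Y)$. A secondary, routine task is to transcribe the dual of \cite{Scarlett2020Information} into the notation of \eqref{annotation} so that the final identification is literal.
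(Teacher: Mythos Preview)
Your proposal is correct and follows essentially the same route as the paper: eliminate $\bdpsi$ via the first-order condition $\psi_j^\star=q_j/\sum_k\phi_k e^{-\zeta d_{kj}}$, substitute back to obtain the reduced objective, and then change variables $\phi_i=p_i\,\widehat{\phi_i}$ to recover the Scarlett form. The only difference is that you supply slightly more justification (strict concavity in $\psi_j$, handling of zero-mass coordinates) than the paper does, but the argument is the same.
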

\begin{proof} 
    The proof is mainly built on the properties of conditional probability distributions. 
    Details are presented in the appendix. 
\end{proof}

While these two dual forms are equivalent, the dual form in \cite{Scarlett2020Information} is inconvenient to be applied for computing $C_{\mathrm{LM}}$ directly due to the lack of closed-form iterations. 
Next, we will present our proposed approach
based on the dual form \eqref{LM_dual}, which guarantees closed-form iterations.
\vspace{+.05in}

By substituting the dual form \eqref{LM_dual}, the max-min optimization problem \eqref{LMC_def} can be converted into the following double maximization model:
\begin{equation} \label{double_max}
    \begin{aligned}
        \max_{\bdp}& \max_{\substack{\bdphi, {\bdpsi}, \\ \zeta \geq 0}} ~~
        \left(
        1
        -\sum_{i=1}^M \sum_{j=1}^N \phi_ie^{-\zeta d_{ij}}\psi_j 
        - \sum_{i=1}^M p_i \log p_i 
        \right.\\
        & \left. 
        -\!\! \sum_{j=1}^N \left(\sum_{i=1}^M s_{ij} p_i \right) \log \left(\sum_{i=1}^M s_{ij} p_i \right) 
        \!\!
        +\!\! \sum_{i=1}^M p_i \log \phi_i 
        \right.
        \\
        & \left.
        +\!\! \sum_{j=1}^N \left(\sum_{i=1}^M s_{ij} p_i \right) \log \psi_j 
        \!-\! \zeta \sum_{i=1}^M \sum_{j=1}^N d_{ij}s_{ij}p_i
        \right)
        .
    \end{aligned}
\end{equation}

%

\subsection{Alternating Maximization Algorithm}

It should be noted that utilizing the model \eqref{double_max} for computing $C_\mathrm{LM}$ directly results in challenges when updating the variable $\bdp$ with a closed-form solution. 
%
This challenge arises due to the complexity introduced by the entropy term $\sum_{j=1}^N \left(\sum_{i=1}^M s_{ij} p_i \right) \log \left(\sum_{i=1}^M s_{ij} p_i \right)$ in \eqref{double_max}.
Hence, to ensure a closed-form solution when updating $\bdp$, we introduce
an elegant variable transform 
$\widetilde{\psi_j} \triangleq \frac{\psi_j}{q_j}$ in the following model.
Using the notations in \eqref{annotation} and the dual form in \eqref{LM_dual}, we further propose the following model for computing $C_\mathrm{LM}$.
\begin{proposition}
The optimized LM rate $C_\mathrm{LM}$ in \eqref{LMC_def} with the average power constraint $\Gamma$ is equivalent to solving the following maximization problem: 
\begin{equation}
    \begin{aligned}
        \max_{\substack{\bdp, \bdphi, \widetilde{\bdpsi}, \\ \zeta \geq 0}} \quad & - \sum_{i=1}^M p_i \log p_i + \sum_{i = 1}^M p_i \log T_i(\bdphi, \widetilde{\bdpsi}, \zeta) + 1 \\
        \mathrm{s.t.}\quad &  \sum_{i = 1}^M p_i  = 1, \quad \sum_{i = 1}^M p_i \|x_i\|^2 \leq \Gamma, 
    \end{aligned}
    \label{double_max_new}
\end{equation}
where the function $T_i(\bdphi, \widetilde{\bdpsi}, \zeta)$ 
is defined as
%
\begin{equation*}
    T_i
    \triangleq \phi_i \exp
    \left(\sum\limits_{j=1}^N s_{ij}
    \left[  \!-\!\!\!\sum\limits_{k=1}^M\!\! \phi_k e^{-\!\zeta d_{kj}}\widetilde{\psi_j}
    \!\!+\!\! \log\widetilde{\psi_j}
    \!\!-\!\!  \zeta d_{ij}  \right]
    \right).
\end{equation*}
%
\end{proposition}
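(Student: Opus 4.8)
The plan is to obtain \eqref{double_max_new} from the double maximization model \eqref{double_max}, which the discussion preceding the statement has already shown to be equivalent to \eqref{LMC_def} via Theorem~1. Fix a feasible $\bdp$, so that the output masses $q_j=\sum_{i=1}^M s_{ij}p_i$ are determined, and introduce the substitution $\psi_j=q_j\widetilde{\psi_j}$. Since every $\psi_j$ attaining the inner supremum in \eqref{LM_dual} is strictly positive (it has the form $e^{-\beta_j-1/2}$ in the proof of Theorem~1) and, whenever $q_j>0$, the map $\psi_j\mapsto\widetilde{\psi_j}=\psi_j/q_j$ is a bijection of the relevant domain onto itself, the inner supremum over $(\bdphi,\bdpsi,\zeta)$ equals the supremum of the transformed objective over $(\bdphi,\widetilde{\bdpsi},\zeta)$; taking the outer maximum over $\bdp$ then preserves the overall optimal value. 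The constraint set $\mcP(\mcX)$ together with the average power constraint $\Gamma$ contributes exactly $\sum_{i=1}^M p_i=1$ and $\sum_{i=1}^M p_i\|x_i\|^2\le\Gamma$.

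Next I would carry out the substitution in the bracketed objective of \eqref{double_max} term by term. Writing $\psi_j=q_j\widetilde{\psi_j}$ and $q_j=\sum_i s_{ij}p_i$, the coupling term becomes $-\sum_{i,j}\phi_ie^{-\zeta d_{ij}}\psi_j=-\sum_j q_j\widetilde{\psi_j}\sum_k\phi_k e^{-\zeta d_{kj}}=-\sum_i p_i\sum_j s_{ij}\bigl(\sum_k\phi_k e^{-\zeta d_{kj}}\widetilde{\psi_j}\bigr)$; the logarithmic term splits as $\sum_j q_j\log\psi_j=\sum_j q_j\log q_j+\sum_j q_j\log\widetilde{\psi_j}$, whose first part cancels the output-entropy term $-\sum_j q_j\log q_j$ — the very term identified as the obstruction to a closed-form $\bdp$-update — leaving $\sum_j q_j\log\widetilde{\psi_j}=\sum_i p_i\sum_j s_{ij}\log\widetilde{\psi_j}$; and the last term is already $-\zeta\sum_i p_i\sum_j s_{ij}d_{ij}$. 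Collecting these together with $\sum_i p_i\log\phi_i$, the bracketed objective of \eqref{double_max} equals $\sum_i p_i\log\phi_i+\sum_i p_i\sum_j s_{ij}\bigl[-\sum_k\phi_k e^{-\zeta d_{kj}}\widetilde{\psi_j}+\log\widetilde{\psi_j}-\zeta d_{ij}\bigr]=\sum_i p_i\log T_i(\bdphi,\widetilde{\bdpsi},\zeta)$ by the definition of $T_i$. Hence the full objective of \eqref{double_max} turns into $1-\sum_i p_i\log p_i+\sum_i p_i\log T_i(\bdphi,\widetilde{\bdpsi},\zeta)$, which is the objective of \eqref{double_max_new}, establishing the claimed equivalence (of optimal values, and of optimizers under the change of variables).

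The one point that genuinely needs care — and the step I expect to be the main obstacle — is the validity of the reparametrization $\widetilde{\psi_j}=\psi_j/q_j$: it depends on the variable input distribution $\bdp$ through $q_j$, so one must argue that for each fixed feasible $\bdp$ it is a bijective change of variables of the optimization domain, and dispose of the degenerate case $q_j=0$ (e.g.\ by restricting to channels with $\sum_i s_{ij}>0$ for all $j$, or by a continuity/limiting argument, noting such a $j$ contributes nothing to the objective). Everything after that is bookkeeping: a reordering of finite sums using $q_j=\sum_i s_{ij}p_i$ and a careful tracking of the cancellation described above, with the positivity of $\phi_i$ and $\widetilde{\psi_j}$ (inherited from Theorem~1) ensuring all logarithms and the exponential in $T_i$ are well defined throughout.
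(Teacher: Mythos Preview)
Your proposal is correct and follows exactly the paper's own route: the paper's proof is the one-line statement ``By substituting $T_i(\bdphi, \widetilde{\bdpsi}, \zeta)$ into \eqref{double_max} and extracting the coefficients $p_{i}$, we obtain the maximization model \eqref{double_max_new},'' and your write-up is precisely a careful execution of that substitution $\psi_j=q_j\widetilde{\psi_j}$ together with the cancellation of the output-entropy term and the regrouping into $\sum_i p_i\log T_i$. The extra care you take about the bijectivity of $\widetilde{\psi_j}=\psi_j/q_j$ and the degenerate case $q_j=0$ goes beyond what the paper addresses, but it does not change the approach.
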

\begin{proof}
    By substituting $T_i(\bdphi, \widetilde{\bdpsi}, \zeta)$ into \eqref{double_max} and extracting the coefficients $p_{i}$, we obtain the maximization model \eqref{double_max_new}.
\end{proof}

Based on the above maximization model, the key idea is to update $\bdp$ and $\bdphi, \widetilde{\bdpsi}, \zeta$ in an alternating ascending way. 
%
%
%
%
It is worth noting that,
although the proposed steps correspond to solving a sub-problem 
with constraint when updating $\bdp$ due to the introduction of the power constraint, we can still update $\bdp$ in closed-form through the proposed model \eqref{double_max_new}.
%

\subsubsection{Fix $\bdT(\bdphi, \widetilde{\bdpsi}, \zeta)$ and update $\bdp$}

For fixed $\bdphi, \widetilde{\bdpsi}, \zeta$, the maximization problem \eqref{double_max_new} can be regarded as an optimization problem with respect to $\bdp$ and its Lagrangian is given by: 
\begin{equation}
    \begin{aligned}
        \mathcal{L}(\bdp; & \lambda, \eta) = 
         - \sum_{i=1}^M p_i \log p_i + \sum_{i = 1}^M p_i \log T_i  +1 \\
        & - \lambda\left( \sum_{i = 1}^M p_i \|x_i\|^2 - \Gamma\right)
        - \eta\left( \sum_{i = 1}^M p_i - 1 \right),
    \end{aligned}
    \label{lmc_lagarange}
\end{equation}
where $\lambda \in \mathbb{R}^+$ and $\eta \in \mathbb{R}$ are dual variables.

By taking the partial derivative of the Lagrangian $\mathcal{L}(\bdp; \lambda, \eta)$ with respect to $\bdp$, we could obtain
the optimal solution $$p_i^{*} = T_i e^{- \lambda \|x_i\|^2} e^{-1 - \eta}.$$
%
Noting that $\sum_{i=1}^M p_i=1$, we can update $\bdp$ by
\begin{equation}
    p_i = \frac{T_i e^{- \lambda \|x_i\|^2}}{\sum_{i^{'} = 1}^M T_{i^{'}} e^{- \lambda \|x_{i^{'}}\|^2}}
    , \quad i = 1,\cdots, M,  
    \label{update_p}
\end{equation}
where the multiplier $\lambda \in \mathbb{R}^+$ is updated via finding the root of the following one-dimensional monotonic function:
\begin{equation*}
    F(\lambda) \triangleq
    -\Gamma + 
    \frac{\sum_{i = 1}^M \|x_i\|^2 T_i e^{- \lambda \|x_i\|^2}}
    {\sum_{i = 1}^M T_i e^{- \lambda \|x_i\|^2} }.
    \label{update_lam}
\end{equation*}
%
%
In the case of $F(0) \leq 0$, it is observed that the constraint associated with the multiplier $\lambda$ has already been satisfied. 
Therefore, we simply set $\lambda = 0$ without solving the root.
%

\subsubsection{Fix $\bdp$ and update $\bdphi, \widetilde{\bdpsi}, \zeta$}

For fixed $\bdp$, this is equivalent to computing the LM rate (with a prescribed input distribution). 
Similar to \cite{ye2022optimal}\cite{wu2022communication}, we can update the variables $\bdphi$, $\widetilde{\bdpsi}$ according to:
\begin{equation}
    \phi_i = \dfrac{p_i}
    {\sum\limits_{j=1}^N e^{-\zeta d_{ij}}\widetilde{\psi_j} \left(\sum\limits_{k=1}^M s_{kj} p_k \right) }, ~ \widetilde{\psi_j} = \dfrac{1}
    {\sum\limits_{i=1}^M \phi_i e^{-\zeta d_{ij}}}.
    \label{update_dual}
\end{equation}
%
%

Also, we update the variable $\zeta\in\mbbR^{+}$ by finding the root of the monotonic function below:
\begin{equation*}    \label{update_zeta}
    G(\zeta) \triangleq
    \sum_{k=1}^M \sum_{j=1}^N 
    \left[
    \phi_k d_{kj} e^{-\zeta d_{kj}}\widetilde{\psi_j}\left(\sum_{i=1}^M s_{ij} p_i \right) 
    - d_{kj}s_{kj}p_k
    \right].
\end{equation*}
Again, in the case of $G(0) \leq 0$, we can directly set $\zeta = 0$ instead of solving $G(\zeta)=0$. 
%
%

To summarize, we update the variables $\bdp$ and $\bdphi, \widetilde{\bdpsi}, \zeta$ in an alternating manner.
Since this algorithm is based on alternating maximization and double maximization formulation, 
we call it the Alternating Double Maximization (ADM) algorithm.
For clarity, the pseudo-code is presented 
in Algorithm \ref{alg:main}.
The derivation details are shown in the appendix.
\begin{algorithm}[ht]
	
	\renewcommand{\algorithmicrequire}{\textbf{Input:}}
	\renewcommand\algorithmicensure {\textbf{Output:} }
	\caption{Alternating Double Maximization (ADM)}
	\label{alg:main}
	
	\begin{algorithmic}[1]
	\REQUIRE Decoding metric $d_{ij}$, Average power constraint $\Gamma$, Channel transition law $s_{ij}$, Iteration number $max\_iter$ \\
         
        \STATE \textbf{Initialization:} $\bdphi^{(0)} = \bm 1_M$, $\widetilde{\bdpsi}^{(0)} = \bm 1_N$, $\zeta^{(0)}=\lambda^{(0)} = 1$;\\
        
        \FOR{$l = 1 : max\_iter$} 
        \STATE Solve $F(\lambda) = 0$ for $\lambda\in\mathbb{R}^+$ with Newton's method
        \FOR{$i = 1 : M$}
        \STATE Update $p_i ^{(l)}$ according to \eqref{update_p}
        \ENDFOR
        
        \FOR{$i = 1 : M$}
        \STATE Update $\phi_i ^{(l)}\leftarrow\frac{p_i ^{(l)}}{\sum\limits_{j=1}^N e^{-\zeta d_{ij}}\widetilde{\psi_j }^{(l-1)} \left(\sum\limits_{k=1}^M s_{kj} p_k ^{(l)} \right)}$ 
        \ENDFOR
        \FOR{$j = 1 : N$}
        \STATE Update $\widetilde{\psi_j} ^{(l)}\leftarrow\frac{1}{\sum\limits_{i=1}^M \phi_i ^{(l)} e^{-\zeta d_{ij}}}$
        \ENDFOR
        \STATE  Solve $G(\zeta) = 0$ for $\zeta\in\mathbb{R}^+$ with Newton's method
        \ENDFOR
        
        \RETURN $C_\mathrm{LM}$
	\end{algorithmic}
\end{algorithm}


In the following, we give the proof of the convergence in our proposed algorithm.
%
%
This proof relies on the assumption that the optimal points in \eqref{double_max_new} are bounded, a reasonable assumption that holds for the majority of non-extreme cases in practice, and it also ensures the Lipschitz continuity of the gradient. 

\begin{theorem}
    The iteration variables produced by the ADM algorithm converges to the local optimal of the optimized LM rate problem \eqref{LMC_def} satisfying the Nash equilibrium condition.
\end{theorem}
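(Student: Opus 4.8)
The plan is to establish convergence of the ADM algorithm by framing it as a block-coordinate ascent on the single smooth objective $\Phi(\bdp,\bdphi,\widetilde{\bdpsi},\zeta)$ appearing in \eqref{double_max_new}, and then invoking a standard convergence theorem for alternating maximization. First I would verify that each of the two block updates is an \emph{exact maximizer} of $\Phi$ over its block: for the $\bdp$-block this follows because, for fixed dual variables, \eqref{double_max_new} is a concave maximization over the simplex intersected with the power ball (the objective is a sum of $-p_i\log p_i$ plus a linear term, hence strictly concave), and the KKT stationarity conditions are exactly the closed-form update \eqref{update_p} together with the monotone root-finding equation $F(\lambda)=0$; for the $(\bdphi,\widetilde{\bdpsi},\zeta)$-block this reduces, by the Proposition, to the classical LM-rate dual problem with prescribed input distribution, which is convex in each of these variables and whose stationarity conditions give \eqref{update_dual} and $G(\zeta)=0$. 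So each half-iteration genuinely increases $\Phi$ (non-strictly), and the sequence $\{\Phi^{(l)}\}$ is monotone.

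Next I would argue boundedness and hence existence of limit points. Using the stated assumption that the optimal points of \eqref{double_max_new} are bounded, together with the constraint set for $\bdp$ being compact, the iterates $\{(\bdp^{(l)},\bdphi^{(l)},\widetilde{\bdpsi}^{(l)},\zeta^{(l)})\}$ lie in a compact set, so a convergent subsequence exists; the assumption also gives Lipschitz continuity of $\nabla\Phi$ on the relevant region. Then I would apply the block-coordinate-ascent convergence result (in the spirit of Grippo--Sciandrone / Bertsekas, or the two-block Gauss--Seidel case where convergence holds without additional convexity assumptions because there are only two blocks): since $\Phi$ is continuously differentiable, each block update attains the exact blockwise maximum, and the feasible set is a product of the $\bdp$-constraint set and the (open, but effectively bounded by assumption) dual domain, every limit point is a coordinatewise maximizer, i.e.\ a stationary point satisfying the blockwise first-order conditions. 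Because fixing one block and maximizing the other is precisely the best-response map of a two-player game — player one choosing $\bdp$, player two choosing $(\bdphi,\widetilde{\bdpsi},\zeta)$ — a coordinatewise maximizer is by definition a Nash equilibrium of that game, and unwinding the dual transformation shows it corresponds to a local optimum of the max-min problem \eqref{LMC_def}.

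Finally I would translate the Nash/stationarity statement back to \eqref{LMC_def}: at a limit point the dual block is optimal for the inner problem given $\bdp$, so the inner value equals $I_{\mathrm{LM}}(Q_X)$ with $Q_X$ the limiting input distribution (by the Theorem's duality), and the $\bdp$ block is a constrained stationary point of $Q_X \mapsto I_{\mathrm{LM}}(Q_X)$, i.e.\ a local maximizer of the outer problem; together these are exactly the KKT/Nash conditions for \eqref{LMC_def}.

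The main obstacle I anticipate is the technical gap between ``two-block exact-maximization Gauss--Seidel converges to stationary points'' and ``the algorithm converges'' in the strong sense: without strict concavity in the dual block (or a uniqueness argument for the dual maximizer) one only gets subsequential convergence of the iterates and convergence of the objective values, not convergence of the whole sequence, and one must also be careful that the root-finding steps for $\lambda$ and $\zeta$ are well-defined (monotonicity of $F$ and $G$, handled by the $F(0)\le 0$, $G(0)\le 0$ fallbacks) and that the updates stay in the interior where $\log\phi_i$, $\log\widetilde{\psi_j}$ are finite. I would handle this by either (i) restricting the claim to subsequential convergence to a Nash point, which is what the theorem statement really needs, or (ii) invoking the boundedness assumption more aggressively to get a Kurdyka--\L ojasiewicz-type argument (the objective is real-analytic in the interior) yielding full-sequence convergence; the KL route is cleaner but requires more care near the boundary of the simplex.
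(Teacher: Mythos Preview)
Your proposal is essentially on the same track as the paper's own argument, which simply invokes the block multiconvex convergence theory of Xu and Yin (the reference \texttt{yinwotao2013block}), after observing that the objective in \eqref{double_max_new} is concave in each block and that the gradients are Lipschitz under the stated boundedness assumption. So the high-level strategy---monotone block-coordinate ascent on a block multi-concave objective, then appeal to a standard convergence theorem---matches.

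There is, however, one concrete inaccuracy in your framing that you should fix. You treat ADM as a \emph{two}-block exact Gauss--Seidel scheme, with the second block being $(\bdphi,\widetilde{\bdpsi},\zeta)$ jointly, and you claim ``each of the two block updates is an exact maximizer of $\Phi$ over its block.'' But inspect Algorithm~\ref{alg:main}: within a single outer iteration the algorithm performs \emph{one} update of $\bdphi$, then \emph{one} update of $\widetilde{\bdpsi}$, then \emph{one} Newton solve for $\zeta$; it does not iterate these to convergence. Hence the dual block is \emph{not} exactly maximized per outer step, and the two-block Grippo--Sciandrone/Bertsekas result you cite does not apply as written. The clean repair is exactly what the paper does implicitly: view ADM as a \emph{four}-block scheme in $(\bdp,\bdphi,\widetilde{\bdpsi},\zeta)$, verify concavity in each individual block (entropy plus linear for $\bdp$; linear plus $\log$ for $\bdphi$ and $\widetilde{\bdpsi}$; $-\sum \phi_k e^{-\zeta d_{kj}}\widetilde{\psi_j}$ is concave in $\zeta$ plus a linear term), note that each single-block update in the algorithm \emph{is} then the exact coordinate maximizer, and invoke the multi-block result of Xu--Yin rather than the two-block special case. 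Your downstream discussion---Nash interpretation, subsequential versus full-sequence convergence, the KL option---remains valid once you make this adjustment; indeed your treatment of those issues is considerably more careful than the paper's one-line citation.
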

\begin{proof}
    This theorem is an instance of the general theory developed in \cite{yinwotao2013block} regarding the convergence in alternating optimization.
    Noting that the objective in \eqref{double_max_new} is block multi-convex, i.e., convex in each alternating direction, through the dual form \eqref{LM_dual}, and that the gradient in each direction is Lipschitz continuous, the proposed ADM algorithm converges. 
    %
    %
    %
\end{proof}

\section{Numerical Results} 
\label{sec_numerical}

This section evaluates the performance of the ADM algorithm for computing $C_\mathrm{LM}$. 
Our numerical results will show that $C_\mathrm{LM}$ gives a higher achievable rates compared to the LM rate under prescribed channel input probability distributions.
We assess $C_\mathrm{LM}$ over the additive white Gaussian noise (AWGN) channels with IQ imbalances, 
%
modeled by
\vspace{-.02 in}
\begin{equation*}
Y = HX+Z,\quad Z\sim \mathcal{N}({\bf 0},\sigma_Z^2 I).
\end{equation*}
The channel matrix $H \in \mathbb{R}^{2 \times 2}$ is a combination of rotation and scaling effects as
\begin{equation*}
H = 
\begin{pmatrix}
    \eta_1 & 0\\
    0 & \eta_2
\end{pmatrix}
\begin{pmatrix}
    \cos\theta & \sin\theta \\
    -\sin\theta & \cos\theta
\end{pmatrix},
\end{equation*}
in which the parameters $\eta_1$ and $\eta_2$ denote the scaling on the signal,
and the parameter $\theta$ represents the degree of the rotation on the signal. 
We take $\eta_1 = 1$ and $\eta_2 = \eta$ for simplicity.
%
%


%
According to the channel transition law, the channel output alphabet $\mcY$ is $\mathbb{R}^2$. 
But as the AWGN is concentrated around the origin, the channel output $\mcY$ mainly lies within a finite region, and hence
we truncate the alphabet $\mcY$ with a sufficiently large region, e.g. $[-8, 8]\times[-8, 8]$.
%
%
Since our algorithm is applicable to DMC, 
we proceed to discretize the continuous region $[-8,8]\times [-8,8]$ by a set of uniform grid points $\{y_i\}_{i=1}^N$: 
\vspace{-.1 in}
\begin{align*}
    y_{r\sqrt{N}+s} &= (-8+r\Delta y, -8+(s-1)\Delta y),\,\,\Delta y = \frac{16}{\sqrt{N}-1}, \\
    r &=0,1,\cdots ,\sqrt{N}-1,\,\, s = 1,2,\cdots, \sqrt{N}. 
\end{align*}
In this work, we set $N = 10,000$ for the QPSK, 16QAM and 64QAM modulation schemes, and $N = 40,000$ for the 256QAM modulation scheme.

In addition, the distance $d(\bdx,\bdy)=\|\bdy - \hat{H}\bdx\|_2^2$ is used in the decoding metric $q(\bdx, \bdy) = e^{-d(\bdx,\bdy)}$, where $\hat{H}$ is an estimate of the channel matrix $H$. 
In the sequel, we focus on the case where the decoder disregards the issue of mismatch, i.e., $\hat{H} = I$.
Moreover, we define $\mbox{SNR} = {1}/{2\sigma_Z^2}$.
As baseline, we first compute the LM rate $I_\mathrm{LM}$ according to the methodology described in \cite{ye2022optimal}, when all the points in the constellation are equally likely, i.e., having uniform distribution, under the (normalized) average power constraint $\Gamma = 1$. 

Then, with the same position of constellation points, and under the same average power constraint $\Gamma = 1$, we optimize the input distribution $\bm p$ to compute the optimized LM rate $C_{\mathrm{LM}}$ through the ADM algorithm presented in the previous section.

All the experiments are conducted on a PC with 8G RAM and one Intel(R) Core(TM) Gold i5-8265U CPU @1.60GHz. 

\subsection{Convergence behavior}
%
First, let us study the convergence of the proposed ADM algorithm by analyzing the residual errors defined in \eqref{Err_norm}.
\begin{subequations} \label{Err_norm}
\begin{align}
    r_{\phi} &= \sum_{i = 1}^M \left| \phi_i \sum\limits_{j=1}^N e^{-\zeta d_{ij}}\widetilde{\psi_j} \left(\sum\limits_{k=1}^M s_{kj} p_k \right) - p_i \right|,  \\
    r_{\psi} &= \sum_{j = 1}^N \left|
    \left(\widetilde{\psi_j}\sum\limits_{i=1}^M \phi_i e^{-\zeta d_{ij}} -1\right)
    \left(\sum\limits_{k=1}^M s_{kj} p_k \right) \right|, \\
    r_{\zeta} & = \left|G(\zeta)\right|, \quad\quad
    r_{\lambda} = \left|F(\lambda)\right|.
\end{align}
\end{subequations}
%
%

Fig. \ref{fig:modres} shows the convergent trajectories of residual errors versus iteration steps for QPSK, 16QAM, 64QAM and 256QAM, with $(\eta, \theta)= (0.9,{\pi}/{18})$ and SNR = $0$ dB. 
It is noted that the ADM algorithm converges to a tolerance level below $10^{-6}$ for different modulation schemes examined.
\begin{figure}[H]
	\centering
	\includegraphics[width=0.95\linewidth]{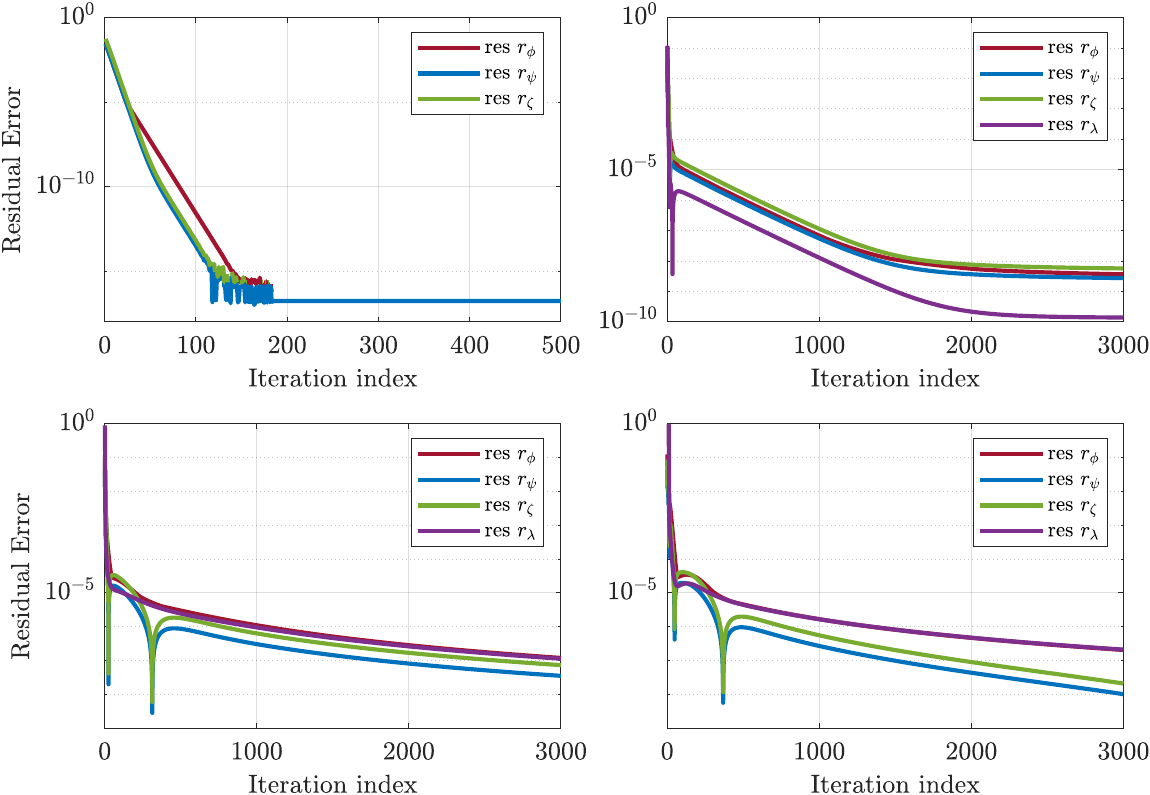}
        \caption{The convergent trajectories of the residual errors $r_{\phi}$ (Red), $r_{{\psi}}$ (Blue), $r_{\zeta}$ (Green), and $r_{\lambda}$ (Purple). 
    Upper Left: The QPSK modulation scheme.
    Upper Right: The 16QAM modulation scheme.
    Lower Left: The 64QAM modulation scheme.
    Lower Right: The 256QAM modulation scheme.}
        \label{fig:modres}
\end{figure}
%

%

%
%
%
%

\subsection{Comparison with the LM Rate}

%
This subsection
compares $C_\mathrm{LM}$ with existing outcomes.
In particular, we compare $C_\mathrm{LM}$ with the LM rate $I_\mathrm{LM}$ computed  
using the method in \cite{ye2022optimal}. 
%
As 
mentioned in the introduction, 
due to the difficulties in computing $C_\mathrm{LM}$, we have not seen 
prior published work with numerical results on it for serving as baseline.
%
%

In this case, the ADM algorithm is run for  $(\eta, \theta)= (0.9,{\pi}/{18}),~(0.8,{\pi}/{18}),~(0.9,{\pi}/{12})$ and $(0.8,{\pi}/{12})$.
The ADM algorithm execution is terminated if the difference between two consecutive iterations is less than $10^{-10}$, or the number of iterations reaches 3000.


\begin{figure}[H]
    \centering
    \includegraphics[width=0.9\linewidth]{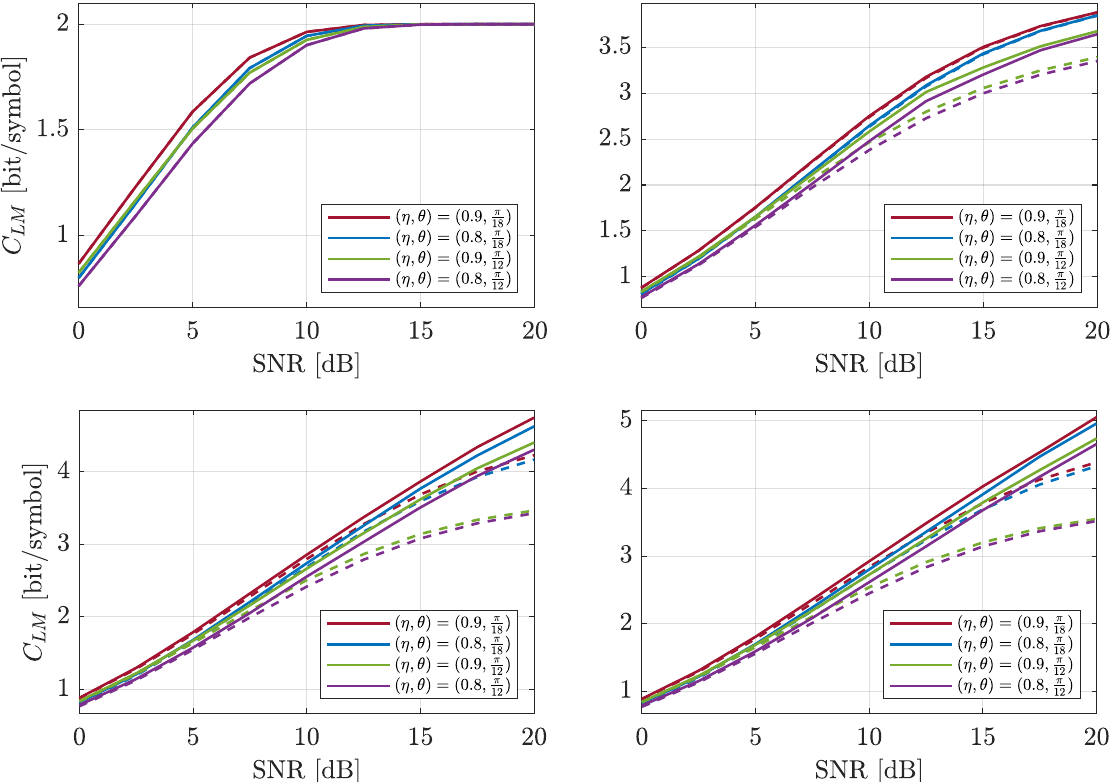}
    \caption{$C_\mathrm{LM}$ (solid) and $I_\mathrm{LM}$ (dashed) versus SNR under different mismatched cases. 
    Upper Left: The QPSK modulation scheme.
    Upper Right: The 16QAM modulation scheme.
    Lower Left: The 64QAM modulation scheme.
    Lower Right: The 256QAM modulation scheme.
    }
    \label{fig:modout}
\end{figure}


Fig. \ref{fig:modout} compares $C_\mathrm{LM}$ with $I_\mathrm{LM}$ for different values of SNR. 
As expected, $C_\mathrm{LM}$ performs consistently larger than $I_\mathrm{LM}$, and the improvement is more noticeable in the high SNR regime. 
Moreover, for a fixed SNR, the 
gap between $C_\mathrm{LM}$ and $I_\mathrm{LM}$
becomes more remarkable with increasing modulation orders.
In addition, it is demonstrated that both $C_\mathrm{LM}$ and $I_\mathrm{LM}$ drop with decreasing values of  $\eta$ or increasing values of $\theta$.

%
%
%

%
%

%

\section{Conclusion} \label{sec_conclusion}
%
This work presented an approach for evaluating $C_\mathrm{LM}$ in mismatched decoding. 
The main difficulty of this problem was that the corresponding optimization was non-convex when the channel input probability distribution needed to be optimized. 
%
%
To tackle this difficulty, we proposed a novel dual form of the LM rate, and transformed the non-convex max-min problem to a double maximization, whose each direction was convex. 
This new formulation led to a maximization problem setup, and then an alternating double maximization algorithm was presented, where each step only needed to solve a closed-form iteration. 
Numerical results demonstrated evident improvement of $C_\mathrm{LM}$ over the LM rate without input optimization for different modulation schemes, and validated the convergence of the proposed algorithm.  
Due to limited space, discussions on complexity and real-world applications are deferred to a subsequent long article.

\bibliographystyle{bibliography/IEEEtran}
\bibliography{bibliography/CLM_REF}

\newpage
\quad
\newpage
\begin{appendix}





%
\subsection{Proof of Proposition \ref{prop:equi}} \label{ap.B}
%
We will elucidate the equivalence between our proposed dual form of LM rate \eqref{LM_dual} and the one introduced in \cite[p. 19]{Scarlett2020Information},

\begin{proof}
As it shown in \cite[p. 19]{Scarlett2020Information}, the dual form of the LM rate is given by the following maximization problem:
\begin{equation*} \label{Scarlett_dual_LM_1}
    \max _{\zeta \geq 0, a(\cdot)} \sum_{x, y} Q_X(x) W(y \mid x) \log \frac{q(x, y)^{\zeta} e^{a(x)}}{\sum_{\bar{x}} Q_X(\bar{x}) q(\bar{x}, y)^{\zeta} e^{a(\bar{x})}}.
\end{equation*}
Using the notations in our paper, this dual form is written as: 
\begin{equation} \label{Scarlett_dual_LM_2}
    \max_{\zeta,\widehat{\bdphi}}~~ \sum_{i=1}^{M}\sum_{j=1}^{N} p_{i} s_{ij} \log\frac{e^{-\zeta d_{ij}} \widehat{\phi_{i}}}{\sum_{k=1}^{M} p_{k} e^{-\zeta d_{kj}}\widehat{\phi_{k}}},
\end{equation}
where $\widehat{\phi_{k}} = e^{a(x_{k})}$.

On the other hand, 
recall that the following function $g_{\mathrm{LM}}(\bdphi,\bdpsi,\zeta)$ is the objective function of our proposed dual form of LM rate \eqref{LM_dual}
%
\begin{align*}
    & g_{\mathrm{LM}} (\bdphi ,\bdpsi ,\zeta) = 1-\sum_{i=1}^M \sum_{j=1}^N \phi_i e^{-\zeta d_{ij}} \psi_j - \zeta \sum_{i=1}^M \sum_{j=1}^N d_{ij}s_{ij}p_i\\
    & - \sum_{i=1}^M p_i \log p_i - \sum_{j=1}^N q_j \log q_j  + \sum_{i=1}^M p_i \log \phi_i+ \sum_{j=1}^N q_j \log \psi_j.
\end{align*}
%

Taking the derivative of $g_{\mathrm{LM}}(\bdphi,\bdpsi,\zeta)$ with respect to $\bdpsi$ leads to the condition
\begin{equation*}
    \psi_{j}^{*} = \frac{q_{j}}{\sum_{k=1}^{M} e^{-\zeta d_{kj}}\phi_{k}}.
\end{equation*}
Then, substituting this condition into $g_{\mathrm{LM}}(\bdphi,\bdpsi,\zeta)$, we have
\begin{equation*}
\begin{aligned}
    & g_{\mathrm{LM}}(\bdphi,\bdpsi^{*},\zeta) = \sum_{i=1}^{M} p_i \log \phi_i - \sum_{i=1}^{M} p_i \log p_i \\
    & - \sum_{j=1}^N q_j \log\left(\sum_{k=1}^{M} e^{-\zeta d_{kj}}\phi_{k}\right) -\zeta \sum_{i=1}^M \sum_{j=1}^N d_{ij}s_{ij}p_i.
\end{aligned}
\end{equation*}
Furthermore, denoting $\widehat{\phi_{i}} = \frac{\phi_{i}}{p_{i}}$, we have 
\begin{equation} \label{dual_LM_e}
\begin{aligned}
    g_{\mathrm{LM}}&(\bdphi,\bdpsi^{*},\zeta) = \sum_{i=1}^{M} p_i \log \widehat{\phi_i}+\sum_{i=1}^M \sum_{j=1}^N (-\zeta d_{ij})s_{ij}p_i \\
    &- \sum_{j=1}^N \left(\sum_{i=1}^M p_i s_{ij}\right) \log\left(\sum_{k=1}^{M} p_{k} e^{-\zeta d_{kj}} \widehat{\phi_{k}}\right).
\end{aligned}
\end{equation}
This is exactly the same as the objective in \eqref{Scarlett_dual_LM_2} by summing the three terms and extracting the coefficients $p_{i}s_{ij}$.
\end{proof}
\vspace{-.1in}

\subsection{Derivation Details of \eqref{update_p} and $F(\lambda)$} \label{ap.D}
The primary derivation process of the algorithm will be presented below. 
%

Firstly, recall the optimization problem \eqref{double_max_new} under consideration,
\begin{equation}
    \begin{aligned}
        \max_{\substack{\bdp, \bdphi, \widetilde{\bdpsi}, \\ \zeta \geq 0}} \quad & - \sum_{i=1}^M p_i \log p_i + \sum_{i = 1}^M p_i \log T_i(\bdphi, \widetilde{\bdpsi}, \zeta) + 1 \\
        \mathrm{s.t.}\quad &  \sum_{i = 1}^M p_i  = 1, \quad \sum_{i = 1}^M p_i \|x_i\|^2 \leq \Gamma, 
    \end{aligned}
    \label{a_double_max_new}
\end{equation}
where the function $T_i(\bdphi, \widetilde{\bdpsi}, \zeta)$ is defined as
\begin{footnotesize}
\begin{equation*}
    T_i(\bdphi, \widetilde{\bdpsi}, \zeta) 
    \triangleq \phi_i \exp
    \left(\sum\limits_{j=1}^N s_{ij}
    \left[  \!-\!\!\!\sum\limits_{k=1}^M\!\! \phi_k e^{-\!\zeta d_{kj}}\widetilde{\psi_j}
    \!\!+\!\! \log\widetilde{\psi_j}
    \!\!-\!\!  \zeta d_{ij}  \right]
    \right).
\end{equation*}
\end{footnotesize}

Here we consider the optimization problem with respect to the input distribution $\bdp$.

Under fixed $\bdphi, \bdpsi, \zeta$, the  Lagrangian 
is given by: 
\begin{equation}
    \begin{aligned}
        \mathcal{L}(\bdp; & \lambda, \eta) = 
         - \sum_{i=1}^M p_i \log p_i + \sum_{i = 1}^M p_i \log T_i  +1 \\
        & - \lambda\left( \sum_{i = 1}^M p_i \|x_i\|^2 - \Gamma\right)
        - \eta\left( \sum_{i = 1}^M p_i - 1 \right),
    \end{aligned}
    \label{a_lmc_lagarange}
\end{equation}
where $\lambda \in \mathbb{R}^+$ and $\eta \in \mathbb{R}$ are dual variables.
Take the partial derivative of the Lagrangian $\mathcal{L}(\bdp; \lambda, \eta)$ with respect to $\bdp$, and we could obtain
\begin{equation}
    \frac{\partial \mathcal{L}}{\partial p_i} = - 1 - \log p_i + \log T_i - \lambda \|x_i\|^2 - \eta = 0.
\end{equation}
Hence, the optimal solution is given by
\begin{equation}    \label{pppp}
    p_i^{*} = T_i e^{- \lambda \|x_i\|^2} e^{-1 - \eta}.
\end{equation}
By substituting $p_i^*$ into the Lagrangian \eqref{a_lmc_lagarange}, the dual form of \eqref{a_double_max_new} can be written as:
\begin{equation}
    \max_{ \lambda \geq 0, \,\, \eta} \,\,
    f_0(\lambda,\eta) \triangleq
    - e^{-1-\eta}
    \sum_{i = 1}^M T_i e^{- \lambda \|x_i\|^2}
    -1
    - \lambda\Gamma - \eta .
    \label{eq:dual21}
\end{equation}

To solve \eqref{eq:dual21}, we firstly maximize it in the direction $\eta$ with a fixed $\lambda$, and then maximize it in the direction $\lambda$.

More specifically, taking the partial derivative of $f_0$ with respect to $\eta$, we have
\begin{equation*}
    \frac{\partial f_0}{\partial \eta} 
    = e^{-1-\eta} \sum_{i = 1}^M T_i e^{- \lambda \|x_i\|^2} - 1
     = 0.
\end{equation*}
Then we substitute its solution
$$
\eta^* = -1 + \log\left( \sum_{i = 1}^M T_i e^{- \lambda \|x_i\|^2} \right)
$$ 
into the objective function \eqref{eq:dual21}, and obtain an optimization problem with respect to the variable $\lambda$ only:
\begin{equation}
    \max_{\lambda \geq 0} \,\,
    f_1(\lambda) \triangleq
    -1 - \lambda\Gamma - \log\left( \sum_{i = 1}^M T_i e^{- \lambda \|x_i\|^2} \right).
    \label{eq:dual22}
\end{equation}
Noting that $f_1$ is a concave function, we can solve \eqref{eq:dual22} using the Newton's method with great efficiency. 

In detail, denoting $F(\lambda) \triangleq f_1^{'}(\lambda)$, we obtain 
\begin{equation*}
    F(\lambda) =
    -\Gamma + 
    \frac{\sum_{i = 1}^M \|x_i\|^2 T_i e^{- \lambda \|x_i\|^2}}
    {\sum_{i = 1}^M T_i e^{- \lambda \|x_i\|^2} } 
    \triangleq
    -\Gamma - \frac{f_2^{'}(\lambda)}{f_2(\lambda)},
\end{equation*}
which is actually \eqref{update_lam} above.
Its derivative is negative, due to the Cauchy-Schwarz inequality, i.e., 
\begin{equation*}
\begin{aligned}
    F^{'}(\lambda) = -\frac{f_2^{''}(\lambda) f_2(\lambda) - \left( f_2^{'}(\lambda) \right) ^2}{\left( f_2(\lambda) \right) ^2} 
    \leq 0.
\end{aligned}
\end{equation*}
Then we can solve $F(\lambda) = 0$ by using the Newton's method with great efficiency, and update $\lambda$ according to the solution.

Finally, considering $\sum_{i=1}^M p_i=1$ and \eqref{pppp}  jointly, we can update $p_i$ by
\begin{equation*}
    p_i = \frac{p_i^*}{\sum_{i^{'} = 1}^M p_{i^{'}}^*}
    =  \frac{T_i e^{- \lambda \|x_i\|^2}}{\sum_{i^{'} = 1}^M T_{i^{'}} e^{- \lambda \|x_{i^{'}}\|^2}}, 
    \label{a_update_p}
\end{equation*}
which is actually \eqref{update_p} above.

\subsection{Derivation Details of \eqref{update_dual} and $G(\zeta)$} \label{ap.F}
%
Here we give derivation details in the case of a fixed input distribution  $\bdp$ . 

We aim to maximize the expression \eqref{double_max_new} by formulating an optimization problem with respect to the variables $\bdphi$, $\widetilde{\bdpsi}$, and $\zeta$, 
thereby actually calculating the LM rate.

Taking the partial derivative of the objective function in \eqref{a_double_max_new} with respect to variables $\bdphi, \widetilde{\bdpsi}$, we have
\begin{equation*}
\begin{aligned}
    &-\sum\limits_{j=1}^N e^{-\zeta d_{ij}}\widetilde{\psi_j} \left(\sum_{k=1}^M s_{kj} p_k \right) 
    + \frac{p_i}{\phi_i} = 0,
    \\
    &-\sum\limits_{i=1}^M \phi_i e^{-\zeta d_{ij}} \left(\sum_{k=1}^M s_{kj} p_k \right) 
    + \frac{\left(\sum\limits_{k=1}^M s_{kj} p_k \right)}{\widetilde{\psi_j}} = 0.
\end{aligned}
\end{equation*}
Accordingly, we update $\bdphi, \widetilde{\bdpsi}$ by
\begin{equation*}
    \phi_i = \dfrac{p_i}
    {\sum\limits_{j=1}^N e^{-\zeta d_{ij}}\widetilde{\psi_j} \left(\sum\limits_{k=1}^M s_{kj} p_k \right) }, ~ \widetilde{\psi_j} = \dfrac{1}
    {\sum\limits_{i=1}^M \phi_i e^{-\zeta d_{ij}}},
\end{equation*}
which is actually \eqref{update_dual} above.

Next, take the partial derivative of the objective function with respect to $\zeta$, i.e.,
%
%
\begin{equation*}
    G(\zeta) \triangleq
    \sum_{k=1}^M \sum_{j=1}^N 
    \left[
    \phi_k d_{kj} e^{-\zeta d_{kj}}\widetilde{\psi_j}\left(\sum_{i=1}^M s_{ij} p_i \right) 
    - d_{kj}s_{kj}p_k
    \right],
\end{equation*}
which is actually \eqref{update_zeta} above.
Noting that
\begin{equation*}
    G^{'}(\zeta) = 
    -\sum_{k=1}^M \sum_{j=1}^N \phi_k d_{kj}^2 e^{-\zeta d_{kj}}\widetilde{\psi_j}\left(\sum_{i=1}^M s_{ij} p_i \right) 
    \leq 0,
\end{equation*}
$G(\zeta)$ is monotonic. Hence, we can similarly find the root of $G(\zeta)$ using the Newton's method with great efficiency, and update $\zeta$ according to the solution.

At this point, the derivations of the ADM algorithm are fully presented.



\end{appendix}

\end{document}